\newcommand{\derive}{\mathit{val}}
\newcommand{\deriveInt}{\mathit{itv}}
\newcommand{\VarOcc}{\mathit{vOcc}}
\newcommand{\variable}[1]{X_{\langle #1\rangle}}
\newcommand{\kushi}{\xi_\mathcal{T}}
\newcommand{\suffix}{\mathit{suf}}
\newcommand{\prefix}{\mathit{pre}}
\newcommand{\Substr}{\mathit{Substr}}
\newcommand{\Suffix}{\mathit{Suffix}}
\newcommand{\SuffixTree}{\mathit{ST}}
\author{
  Hideo Bannai
  \and
  Shunsuke Inenaga
  \and
  Masayuki Takeda
}
\institute{
  Department of Informatics, Kyushu University\\
  \email{\{bannai,inenaga,takeda\}@inf.kyushu-u.ac.jp}\\
}
\title{
  Efficient LZ78 factorization of grammar compressed text
}
\date{}
\begin{document}
\maketitle
\begin{abstract}
  We present an efficient algorithm for computing the LZ78
  factorization of a text, where the text is represented
  as a straight line program (SLP), which is a context free grammar
  in the Chomsky normal form that generates a single string.
  Given an SLP of size $n$ representing a text $S$ of length $N$,
  our algorithm computes the LZ78 factorization of $T$
  in $O(n\sqrt{N}+m\log N)$ time and $O(n\sqrt{N}+m)$ space,
  where $m$ is the number of resulting LZ78 factors.
  We also show how to 
  improve the algorithm so that the $n\sqrt{N}$ term
  in the time and space complexities becomes either
  $nL$, where $L$ is the length of the longest LZ78 factor, or
  $(N - \alpha)$ where $\alpha \geq 0$
  is a quantity which depends on the amount of redundancy that 
  the SLP captures with respect to substrings of $S$ of a certain length.
  Since $m = O(N/\log_\sigma N)$ where 
  $\sigma$ is the alphabet size,
  the latter is asymptotically at least as fast as a linear
  time algorithm which runs on the uncompressed string when $\sigma$ is constant,
  and can be more efficient when
  the text is compressible, i.e. when $m$ and $n$ are small.
\end{abstract}
\section{Introduction}
Large scale textual data are usually stored in compressed form,
while it is later decompressed to be used.
In order to circumvent the computational resources required to 
handle and process the cumbersome uncompressed string, 
the compressed string processing (CSP) approach has been gaining attention.
The aim of CSP is to process text given in compressed form
without explicitly decompressing the entire text, 
therefore allowing space efficient, as well as
time efficient processing of the text when it is sufficiently compressed.

Many CSP algorithms work on a representation of the compressed text
called {\em straight line programs} (SLPs).
An SLP is a context free grammar in the Chomsky normal form that
derives a single string.
SLPs can efficiently model the outputs of many different types of
compression algorithms (e.g.: grammar
based~\cite{SEQUITUR,LarssonDCC99}, dictionary
based~\cite{LZ77,LZ78}), and hence, an algorithm that works on an SLP
can be applied to texts compressed by various compression
algorithms.
On the other hand, there are many CSP algorithms which make use of
specific properties that are implicit in the compressed representation
$C(S)$ of text $S$ obtained by using a certain compression algorithm
$C$~\cite{BilleFG09,freschi10:_lz78,gawrychowski11:_optim_lzw,gawrychowski12:_tying_lzw}.
Such CSP algorithms cannot be applied to representations
produced by any arbitrary compression algorithm.
To overcome this problem, we consider the problem of
computing the compressed representation $C(S)$ 
from an arbitrary SLP representing $S$,
without completely decompressing the SLP.

In this paper, we focus on the well known LZ78 compression
algorithm~\cite{LZ78}.
LZ78 compresses a given text based on a dynamic dictionary
which is constructed by partitioning the input string,
the process of which is called LZ78 factorization.
Other than its obvious use for compression, the LZ78 factorization
is an important concept used in various string processing
algorithms and
applications~\cite{crochemore03:_subquad_sequen_align_algor_unres_scorin_matric,li05:_lz78_based_strin_kernel,li05:_genre_class_lz78_based_strin_kernel,li06:_image_class_via_lz78_based_strin_kernel}.
The contribution of this paper is an 
$O(n\sqrt{N} + m\log N)$ time and $O(n\sqrt{N}+m)$ space algorithm 
to compute the LZ78 factorization of a string given as an SLP,
where $N$ is the length of the string, $n$ is the size of the SLP, and
$m$ is the number of LZ78 factors.

We further show how to improve the $n\sqrt{N}$ term in the time and
space complexities in two ways.
An application of doubling search enables the term to be reduced to
$nL$, where $L$ is the longest LZ78 factor.
Also, by applying the recent techniques of~\cite{goto12:_speed},
the term can be reduced to $N - \alpha$, where $\alpha \geq 0$
is a quantity which depends on the amount of redundancy that 
the SLP captures with respect to substrings of $S$ of a certain length.
Since it is known that $m = O(N/\log_\sigma N)$~\cite{LZ78},
where $\sigma$ is the alphabet size,
our approach is guaranteed to be asymptotically at least as fast as a linear
time algorithm which runs on the uncompressed string if $\sigma$ is considered constant,
and can be even more efficient when
the text is compressible, i.e. when $m$ and $n$ are small.

As a byproduct of the above results, 
we also obtain an efficient algorithm which converts 
a given \emph{LZ77 factorization} of a string~\cite{LZ77}
to the corresponding LZ78 factorization without explicit decompression.
We conclude the paper by mentioning several other interesting
potential applications of our algorithm.

\subsection*{Related Work}
An efficient algorithm for computing the LZ78 factorization was presented
in~\cite{jansson07:_compr_dynam_tries_applic_lz}.
Their algorithm requires only
$O(N(\log\sigma + \log\log_{\sigma} N)/ \log_{\sigma} N)$
bits of working space and
runs in $O(N(\log \log N)^2/(\log_{\sigma} N \log \log \log N))$
worst-case time
which is sub-linear when $\sigma = 2^{o(\log N \frac{\log \log \log N}{(\log \log N)^2})}$.
However, their input assumes the uncompressed text and it 
is unknown how to apply their algorithm without completely 
decompressing the SLP.

\section{Preliminaries}

\subsection{Strings}
Let $\Sigma$ be a finite {\em alphabet}
and $\sigma = |\Sigma|$.
An element of $\Sigma^*$ is called a {\em string}.
The length of a string $S$ is denoted by $|S|$. 
The empty string $\varepsilon$ is a string of length 0,
namely, $|\varepsilon| = 0$.
For a string $S = XYZ$, $X$, $Y$ and $Z$ are called
a \emph{prefix}, \emph{substring}, and \emph{suffix} of $S$, respectively.
The set of all substrings of a string $S$ is denoted by $\Substr(S)$.
The $i$-th character of a string $S$ is denoted by 
$S[i]$ for $1 \leq i \leq |S|$,
and the substring of a string $S$ that begins at position $i$ and
ends at position $j$ is denoted by $S[i:j]$ for $1 \leq i \leq j \leq |S|$.
For convenience, let $S[i:j] = \varepsilon$ if $j < i$.
For a string $S$ and integer $q \geq 0$,
let $\prefix(S,q)$ and $\suffix(S,q)$
represent respectively, the length-$q$ prefix and suffix of $T$,
that is,
$\prefix(S,q) = S[1:\min\{q,|S|\}]$ and $\suffix(S,q) = S[\max\{1,|S|-q+1\}:|S|]$.
We also assume that the last character of the string is a special
character `\$' that does not occur anywhere else in the string.

Our model of computation is the word RAM:
We shall assume that the computer word size is at least $\log |S|$, 
and hence, standard operations on
values representing lengths and positions of string $S$
can be manipulated in constant time.
Space complexities will be determined by the number of computer words (not bits).

\subsection{Straight Line Programs}
A {\em straight line program} ({\em SLP}) is a set of assignments 
$\mathcal T = \{ X_1 \rightarrow expr_1, X_2 \rightarrow expr_2, \ldots, X_n \rightarrow expr_n\}$,
where each $X_i$ is a distinct non-terminal variable and each $expr_i$ is an expression
that can be either $expr_i = a$ ($a\in\Sigma$), or $expr_i = X_{\ell(i)} X_{r(i)}$~($i
> \ell(i),r(i)$). An SLP is essentially a context free grammar in the
Chomsky normal form, that derives a single string.
Let $\derive(X_i)$ represent the string derived from variable $X_i$.
To ease notation, we sometimes associate $\derive(X_i)$ with $X_i$ and
denote $|\derive(X_i)|$ as $|X_i|$.
An SLP $\mathcal{T}$ {\em represents} the string $T = \derive(X_n)$.
The \emph{size} of the program $\mathcal T$ is the number $n$ of
assignments in $\mathcal T$. 

The derivation tree of SLP $\mathcal{T}$ is a labeled
ordered binary tree where each internal node is labeled with a
non-terminal variable in $\{X_1,\ldots,X_n\}$, and each leaf is labeled with a terminal character in $\Sigma$.
The root node has label $X_n$.
Let $\mathcal{V}$ denote the set of internal nodes in
the derivation tree.
For any internal node $v\in\mathcal{V}$, 
let $\langle v\rangle$ denote the index of its label
$\variable{v}$.
Node $v$ has a single child which is a leaf labeled with $c$
when $(\variable{v} \rightarrow c) \in \mathcal{T}$ for some $c\in\Sigma$,
or
$v$ has a left-child and right-child respectively denoted $\ell(v)$ and $r(v)$,
when
$(\variable{v}\rightarrow \variable{\ell(v)}\variable{r(v)}) \in \mathcal{T}$.
Each node $v$ of the tree derives $\derive(\variable{v})$,
a substring of $T$,
whose corresponding interval $\deriveInt(v) = [b:e]$,
with $T[b:e] = \derive(\variable{v})$,
can be defined recursively as follows.
If $v$ is the root node, then $\deriveInt(v) = [1:|T|]$.
Otherwise, if $(\variable{v}\rightarrow
\variable{\ell(v)}\variable{r(v)})\in\mathcal{T}$,
then,
$\deriveInt(\ell(v)) = [b_v:b_v+|\variable{\ell(v)}|-1]$
and
$\deriveInt(r(v)) = [b_v+|\variable{\ell(v)}|:e_v]$,
where $[b_v:e_v] = \deriveInt(v)$.
Let $\VarOcc(X_i)$ denote the number of times a variable $X_i$ occurs
in the derivation tree, i.e., 
$\VarOcc(X_i) = |\{ v \mid \variable{v}=X_i\}|$.

For any interval $[b:e]$ of $T (1\leq b < e \leq |T|)$, 
let $\kushi(b,e)$ denote the deepest node $v$ in the derivation tree,
which derives an interval containing $[b:e]$, that is,
$\deriveInt(v)\supseteq [b:e]$,
and no proper descendant of $v$
satisfies this condition.
We say that node $v$ {\em stabs} interval $[b:e]$,
and $\variable{v}$ is called the variable that stabs the interval.
We have $(\variable{v} \rightarrow
\variable{\ell(v)}\variable{r(v)})\in\mathcal{T}$,
$b\in \deriveInt(\ell(v))$, and  $e\in\deriveInt(r(v))$.
When it is not confusing, we will sometimes 
use $\kushi(b,e)$ to denote the variable $\variable{\kushi(b,e)}$.

SLPs can be efficiently pre-processed to hold various information.
$|X_i|$ and $\VarOcc(X_i)$ can be computed for all variables
$X_i~(1\leq i\leq n)$ in a total of $O(n)$ time by a simple dynamic
programming algorithm.

\begin{figure}
  \centerline{\includegraphics[width=0.6\textwidth]{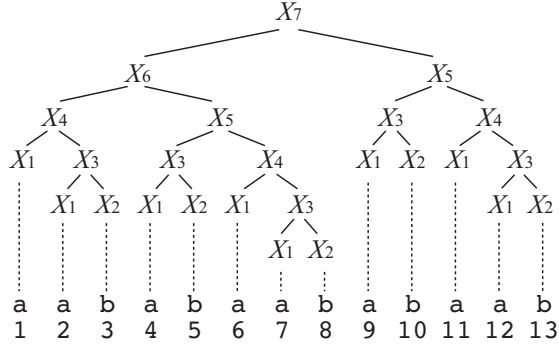}}
  \caption{
    The derivation tree of
    SLP $\mathcal T = \{ X_1 \rightarrow \mathtt{a}$, $X_2 \rightarrow \mathtt{b}$, $X_3 \rightarrow X_1X_2$,
    $X_4 \rightarrow X_1X_3$, $X_5 \rightarrow X_3X_4$, $X_6
    \rightarrow X_4X_5$, $X_7 \rightarrow X_6X_5 \}$.
    $T = \derive(X_7) = \mathtt{aababaababaab}$.
  }
  \label{fig:SLP}
\end{figure}

\subsection{LZ78 Encoding}
\label{subsec:lz78}
\begin{definition}[LZ78 factorization]
  The LZ78-factorization of a string $S$ is the factorization 
  $f_1 \cdots f_m$ of $S$, where each LZ78-factor
  $f_i\in\Sigma^+$ $(1\leq i \leq m)$ is the longest prefix of
  $f_i \cdots f_m$, such that
  $f_i \in \{f_jc \mid 1 \leq j < i, c\in\Sigma \}\cup\Sigma$.
\end{definition}

\begin{wrapfigure}[18]{r}{4cm}
  \vspace{-0.8cm}
  \centerline{
    \includegraphics[width=3cm]{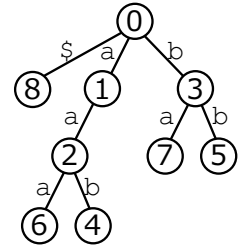}
  }
  \caption{The LZ78 dictionary for the string $\mathtt{aaabaabbbaaaaaaaba\$}$.
    Each node numbered $i$ represents the factor $f_i$ of the LZ78 factorization,
    where $f_i$ is the path label from the root to the node, 
    e.g.: $f_2 = \mathtt{aa}$, $f_4 = \mathtt{aab}$.
  }
  \label{fig:lz78dict}
\end{wrapfigure}

For a given string $S$, 
let $m$ denote the number of factors in its LZ78 factorization.
The LZ78 factorization of the string can be encoded by a sequence of
pairs, where the pair for factor $f_i$ consists of the ID $j$ of the
previous factor $f_j$ ($j = 0$ and $f_0 = \varepsilon$ when there is none) 
and the new character $S[|f_1\cdots f_{i}|]$.
Regarding this pair as a parent and edge label, the factors can also
be represented as a trie. (See Fig.~\ref{fig:lz78dict}.)

By using this trie, the LZ78 factorization of a string of length $N$
can be easily computed incrementally in $O(N\log\sigma)$ time and
$O(m)$ space;
Start from an empty tree with only the root.
For $1 \leq i\leq m$,
to calculate $f_i$, let $v$ be the node of the trie reached by traversing
the tree with $S[p:q]$, where $p = |f_0\cdots f_{i-1}|+1$, 
and $q\geq p$ is the smallest position after $p$
such that $v$ does not have an outgoing edge labeled with $S[q+1]$.
Naturally, $v$ represents the longest previously used LZ78-factor
that is a prefix of $S[p:|S|]$. 
Then, we can insert an edge labeled with $S[q+1]$ to a new node
representing factor $f_i$, branching from $v$.
The update for each factor $f_i$ can be done in $O(|f_i|\log\sigma)$ time
for the traversal and in $O(\log\sigma)$ time for the insertion,
with a total of $O(N\log\sigma)$ time for all the factors.
Since each node of the trie except the root corresponds to an LZ78 factor,
the size of the trie is $O(m)$.

\begin{example}
  The LZ78 factorization of string
  $\mathtt{aaabaabbbaaaaaaaba\$}$ is
  $\mathtt{a}$,
  $\mathtt{aa}$,
  $\mathtt{b}$,
  $\mathtt{aab}$,
  $\mathtt{bb}$,
  $\mathtt{aaa}$,
  $\mathtt{aaaa}$,
  $\mathtt{ba}$,
  $\mathtt{\$}$, and can be represented as
  $(0,\mathtt{a})$,
  $(1,\mathtt{a})$,
  $(0,\mathtt{b})$,
  $(2,\mathtt{b})$,
  $(3,\mathtt{b})$,
  $(2,\mathtt{a})$,
  $(6,\mathtt{a})$,
  $(3,\mathtt{a})$,
  $(0,\mathtt{\$})$.
\end{example}

\subsection{Suffix Trees}
We give the definition of a very important and well known string index
structure, the suffix tree.
To assure property~\ref{def:suffixtreeleaf} for the sake of presentation,
we assume that the string ends with a unique symbol
that does not occur elsewhere in the string.
\begin{definition}[Suffix Trees~\cite{Weiner}]
  For any string $S$, its suffix tree, denoted 
  $\SuffixTree(S)$, is a labeled rooted tree which satisfies the following:
  \begin{enumerate}
   \item each edge is labeled with an element in $\Sigma^+$;
   \item there exist exactly $n$ leaves, where $n = |S|$;
   \item\label{def:suffixtreeleaf} for each string $s \in \Suffix(S)$, there is a unique path from the root 
     to a leaf which spells out $s$;
   \item each internal node has at least two children; 
   \item
     the labels $x$ and $y$ of any two distinct out-going edges from the same node begin
     with different symbols in $\Sigma$
  \end{enumerate}
\end{definition}

Since any substring of $S$ is a prefix of some suffix of $S$,
positions in the suffix tree of $S$ correspond to a substring of $S$
that is represented by the string spelled out on the path from the
root to the position.
We can also define a {\em generalized} suffix tree of a set of strings,
which is simply the suffix tree that contains all
suffixes of all the strings in the set.

It is well known that suffix trees can be represented and constructed
in linear time~\cite{Weiner,McC76,Ukk95}, even independently of the
alphabet size for integer alphabets~\cite{farach97:_optim_suffix_tree_const_large_alphab}.
Generalized suffix trees for a set of strings
$\mathbf{S} =\{S_1,\ldots, S_k\}$, can be constructed in linear time in the
total length of the strings, by simply constructing the suffix tree of the
string $S_1\$_1\cdots S_k\$_k$, and pruning the tree below the first
occurrence of any $\$_i$, where $\$_i$ $(1\leq i\leq k)$
are unique characters that do not occur elsewhere in strings of $\mathbf{S}$.

\section{Algorithm}
We describe our algorithm for computing the LZ78 factorization 
of a string given as an SLP in two steps.
The basic structure of the algorithm follows the simple LZ78
factorization algorithm for uncompressed strings that uses a trie
as mentioned in Section~\ref{subsec:lz78}.
Although the space complexity of the trie is only $O(m)$, we need some way to
accelerate the traversal of the trie
in order to achieve the desired time bounds.

\subsection{Partial Decompression}
We use the following property of LZ78 factors which is straightforward
from its definition.
\begin{lemma}\label{lemma:factor_lb_ub}
  For any string $S$ of length $N$ and its LZ78-factorization
  $f_1\cdots f_m$,
  $m \geq c_N$ and
  $|f_i| \leq c_N$ for all $1\leq i\leq m$, 
  where $c_N = \sqrt{2N+1/4}-1/2$.
\end{lemma}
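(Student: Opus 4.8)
The plan is to argue that the LZ78 factors are all \emph{distinct} as strings (except possibly the last one), together with the trivial observation that there are only finitely many strings of each length, so that the factors cannot all be short. First I would recall the key structural fact from the definition: for $i < m$, each factor $f_i$ is chosen to be the \emph{longest} prefix of $f_i\cdots f_m$ lying in $\{f_j c \mid j<i,\ c\in\Sigma\}\cup\Sigma$. From this I would extract two consequences. First, distinctness: if $f_i = f_k$ for some $i<k<m$, then at step $k$ the string $f_i c = f_k c$ (where $c$ is the character following the occurrence of $f_k$ in $S$) would also be in the dictionary set $\{f_j c' \mid j<k\}$, contradicting maximality of $f_k$ — so all of $f_1,\dots,f_{m-1}$ are pairwise distinct (and in fact $f_m$ can coincide with at most one earlier factor, which does not affect the counting below). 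Second, the set $\{f_1,\dots,f_m\}$ is ``prefix-closed up to the last character'': every $f_i$ with $|f_i|\ge 2$ is of the form $f_j c$, and more importantly the trie of factors has $m$ nodes below the root, one per factor.

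Next I would turn the distinctness into the bound $|f_i| \le c_N$. The point is that if some factor $f_i$ had length $\ell$, then, because the factor set is (essentially) a trie, all $\ell$ of the prefixes $f_i[1{:}1], f_i[1{:}2], \dots, f_i[1{:}\ell]$ of $f_i$ must themselves have appeared as earlier factors (each nonempty proper prefix of a factor is a factor, by the $f_j c$ structure, tracing down the trie path to $f_i$). Hence $m \ge \ell$, i.e.\ $m \ge |f_i|$ for every $i$; this already gives $m \ge \max_i |f_i|$, which is the ``$m \ge$'' half in a weak form. To get the quantitative constant $c_N$, I would combine this with the length accounting $\sum_{i=1}^m |f_i| = N$. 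Since the $m$ factors are distinct strings over $\Sigma$ and form a trie, and a trie with $m$ nodes realized as factors forces the multiset of factor lengths to be ``as small as possible'' when it is a complete structure, the minimum possible value of $\sum |f_i|$ over $m$ distinct trie-closed factors is at least $1 + 2 + \cdots$ in the most balanced case; but the cleanest self-contained estimate uses only that all $m$ factors are distinct and each has length at most $m$, giving — via the trie-path argument — that the shortest way to have $m$ distinct factors whose prefixes are all present uses lengths $1,1,\dots$ up through roughly $\sqrt{2N}$. Concretely: the longest factor $f_i$ of length $L = \max_i|f_i|$ forces $L$ distinct factors (its prefixes) of lengths $1,2,\dots,L$, so $N = \sum|f_i| \ge 1+2+\cdots+L = L(L+1)/2$, whence $L(L+1)/2 \le N$, i.e.\ $L \le \sqrt{2N+1/4}-1/2 = c_N$. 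This proves $|f_i|\le c_N$ for all $i$.

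For the companion bound $m \ge c_N$, I would again use $\sum_{i=1}^m |f_i| = N$ with $|f_i| \le c_N$: then $N = \sum_{i=1}^m |f_i| \le m\cdot c_N$, so $m \ge N/c_N$. Finally, a short computation shows $N/c_N \ge c_N$: indeed $c_N(c_N+1) = (\sqrt{2N+1/4}-1/2)(\sqrt{2N+1/4}+1/2) = 2N$, so $c_N^2 = 2N - c_N \le N$ provided $c_N \le N$ (true for $N\ge 1$), hence $c_N^2 \le N$ and $m \ge N/c_N \ge c_N$. This closes both inequalities.

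The main obstacle, and the step to be careful about, is the claim that a factor of length $L$ forces the existence of $L$ earlier factors of lengths $1,\dots,L$ — equivalently, that the factor set is genuinely closed under taking prefixes (modulo the trie structure), so that the length-sum is bounded below by a triangular number. This needs the precise LZ78 rule: $f_i \in \{f_j c\}\cup\Sigma$ means $f_i$ with its last character removed is either empty or an earlier factor $f_j$; iterating this down the whole length of $f_i$ yields the chain of prefix-factors, but one must check the iteration stays within indices $< i$ and genuinely produces $L$ \emph{distinct} strings (which follows since they have distinct lengths). Once that combinatorial fact is pinned down, the rest is the elementary algebra with $c_N$ sketched above.
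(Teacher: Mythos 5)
Your argument for the bound $|f_i|\le c_N$ is correct and is essentially the paper's: the rule $f_i\in\{f_jc\}\cup\Sigma$ lets you peel off the last character repeatedly, producing earlier factors of lengths $L-1,L-2,\dots,1$ below a factor of length $L$, so $N=\sum_i|f_i|\ge L(L+1)/2$ and $L\le c_N$. The problem is in your proof of $m\ge c_N$. You derive $N\le m\cdot c_N$, hence $m\ge N/c_N$, and then claim $N/c_N\ge c_N$, i.e.\ $c_N^2\le N$. That last step is false: from $c_N(c_N+1)=2N$ you get $c_N^2=2N-c_N$, and $2N-c_N\le N$ is equivalent to $N\le c_N$ (you wrote the inequality in the wrong direction), which fails for every $N\ge 2$. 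Concretely, for $N=10$ we have $c_N=4$, $c_N^2=16>10$, and $N/c_N=2.5<4$. In fact $N/c_N=(c_N+1)/2$, so your route only yields $m\ge(c_N+1)/2$ --- short of the claim by a factor of $2$, not by a fixable constant in the algebra.

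The missing idea is a \emph{per-index} length bound rather than the uniform bound $|f_i|\le c_N$. Since each factor is at most one character longer than some earlier factor (or has length $1$), induction gives $|f_i|\le i$ for every $i$. Then $N=\sum_{i=1}^m|f_i|\le\sum_{i=1}^m i=m(m+1)/2$, so $m(m+1)\ge 2N$ and $m\ge\sqrt{2N+1/4}-1/2=c_N$. This is exactly the paper's argument, and the triangular sum is what recovers the factor of $2$ that the crude bound $N\le m\,c_N$ loses. With that substitution (and your second half kept as is), the proof is complete.
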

\begin{proof}
  Since a factor can be at most 1 character longer than a previously
  used factor, $|f_i| \leq i$.
  Therefore, $N = \sum_{i=1}^m |f_i| \leq \sum_{i=1}^m i$,
  and thus $m \geq \sqrt{2N+1/4} - 1/2$.
  For any factor of length $x = |f_{i_x}|$, there exist
  distinct factors $f_{i_1},\ldots, f_{i_{x-1}}$ 
  whose lengths are respectively $1, \ldots, x-1$.
  Therefore, 
  $N = \sum_{i=1}^m|f_i| \geq \sum_{i=1}^x i$,
  and $x \leq \sqrt{2N+1/4}-1/2$.\qed
\end{proof}

The lemma states that the length of an LZ78-factor is bounded by
$c_N$.
To utilize this property, we use ideas similar to those developed
in~\cite{goto11:_fast_minin_slp_compr_strin,goto12:_speed} for
counting the frequencies of all substrings of a certain length
in a string represented by an SLP;
For simplicity, assume $c_N\geq 2$.
For each variable $X_i \rightarrow X_{\ell(i)}X_{r(i)}$,
any length $c_N$ substring that is stabbed by $X_i$ is a substring of
$t_i = \suffix(\derive(X_{\ell(i)}),c_N-1)\prefix(\derive(X_{r(i)}),c_N-1)$.
On the other hand, all length $c_N$ substrings are stabbed by
some variable. This means that if we consider the set of strings
consisting of $t_i$ for all variables such that
$|X_i|\geq c_N$,  any length $c_N$ substring of $S$
is a substring of at least one of the strings.
We can compute all such strings
$T_S = \{ t_i \mid |X_i|\geq c_N\}$ where
$(X_i \rightarrow X_{\ell(i)}X_{r(i)}) \in \mathcal{T}$
in time linear in the total length,
i.e. $O(nc_N)$ time by a straightforward dynamic programming~\cite{goto11:_fast_minin_slp_compr_strin}.

All length $c_N$ substrings of $S$ occur as
substrings of strings in $T_S$, and 
by Lemma~\ref{lemma:factor_lb_ub}, it follows that $T_S$ contains all
LZ78-factors of $S$ as substrings. 

\subsection{Finding the Next Factor}
In the previous subsection, we described how to partially
decompress a given SLP of size $n$ representing a string $S$ of length $N$,
to obtain a set of strings $T_S$ with total length $O(n\sqrt{N})$, 
such that any LZ78-factor of $S$ is a substring of at
least one of the strings in $T_S$.
We next describe how to identify these substrings.

We make the following key observation: since the LZ78-trie of a string $S$
is a trie composed by substrings of $S$, it can be superimposed
on a suffix tree of $S$, and be completely contained in it,
with the exception that some nodes of the trie may correspond to
implicit nodes of the suffix tree (in the middle of an edge of the
suffix tree).
Furthermore, this superimposition can also be done to the generalized suffix tree
constructed for $T_S$. (See Fig.~\ref{fig:superimpose}.)

\begin{figure}[t]
  \centerline{
    \includegraphics[width=0.7\textwidth]{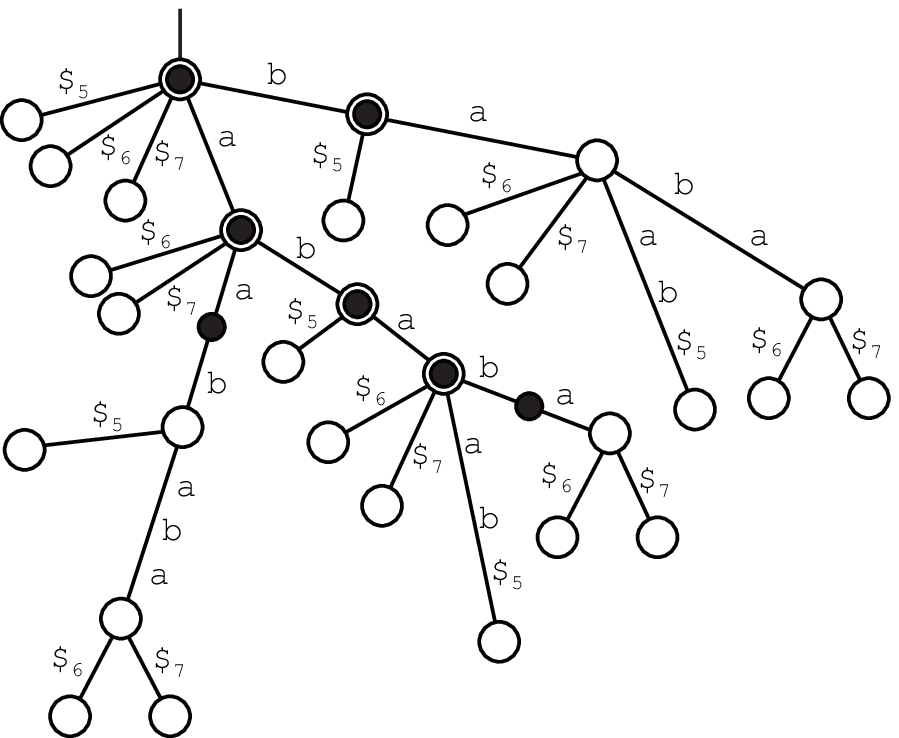}
  }
  \caption{
    The LZ78-trie of string $S = \mathtt{aababaababaab}$,
    superimposed on the generalized suffix tree of $T_S = \{ t_5,t_6,t_7 \} =
    \{\mathtt{abaab\$_5},\mathtt{aababa\$_6},\mathtt{aababa\$_7} \}$
    for the SLP of Fig.~\ref{fig:SLP}.
    Here, $\mathtt{\$_5,\$_6,\$_7}$ are end markers of each 
    string in $T_S$, introduced so that each position in a 
    string of $T_s$ corresponds to a leaf of the suffix tree.
    The subtree consisting of the dark nodes is the LZ78-trie,
    derived from the LZ78-factorization:
    $\mathtt{a},\mathtt{ab},\mathtt{aba},\mathtt{abab},\mathtt{aa},\mathtt{b}$,
    of $S$.
    Since any length $\lfloor c_N\rfloor = 4$ substring of $S$ is a
    substring of at least one string in $T_S$,
    any LZ78-factor of $S$ is a substring of some string of $T_S$, and
    the generalized suffix tree of $T_S$ completely includes the LZ78-trie.
  }
  \label{fig:superimpose}
\end{figure}
Suppose we have computed the LZ78 factorization 
$f_1\cdots f_{i-1}$, up to position $p - 1 = |f_1\cdots f_{i-1}|$,
and wish to calculate the next LZ78-factor starting at position $p$.
Let $v = \kushi(p,p+c_N-1)$,
let $X_j = \variable{v}$ be the variable that stabs the interval $[p:p+c_N-1]$,
let $q$ be the offset of $p$ in $t_j$,
and let $w$ be the leaf of the generalized suffix tree
that corresponds to the suffix $t_j[q:|t_j|]$. 
The longest previously used factor that is a prefix of $S[p:|S|]$
is the longest common prefix between $t_j[q:|t_j|]$
and all possible paths on the LZ78-trie built so far.
If we consider the suffix tree as a semi-dynamic tree, 
where nodes corresponding to the superimposed LZ78-trie are
dynamically added and marked, the node $x$ we seek
is the {\em nearest marked ancestor} of $w$.

The generalized suffix tree for $T_S$ can be computed in $O(n\sqrt{N})$ time.
We next describe how to obtain the values $v$, 
$q$ (and therefore $w$), and $x$
as well as the computational complexities involved.

A na\"ive algorithm for obtaining $v$ and $q$ would be to traverse
down the derivation tree of the SLP from the root, checking the 
decompressed lengths of the left and right child of each variable to determine which child
to go down, in order to find the variables that correspond to
positions $p$ and $p+c_N-1$. By doing the search in parallel,
we can find $v$ as the node at which the search for each position
diverges, i.e., the
lowest common ancestor of leaves in the derivation tree corresponding
to positions $p$ and $p+c_N-1$.
This traversal requires $O(h)$ time, where $h$ is the height of the
SLP, which can be as large as $O(n)$. 
To do this more efficiently, we can apply the algorithm
of~\cite{philip11:_random_acces_gramm_compr_strin}, which allows
random access to arbitrary positions of the SLP in $O(\log N)$ time,
with $O(n)$ time and space of preprocessing.

\begin{theorem}[\cite{philip11:_random_acces_gramm_compr_strin}]
  \label{thm:slprandomaccess}
  For an SLP of size $n$ representing a string of length $N$,
  random access can be supported in time $O(\log N)$ after 
  $O(n)$ preprocessing time and space in the RAM model.
\end{theorem}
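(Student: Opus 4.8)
The plan is to reconstruct the data structure of~\cite{philip11:_random_acces_gramm_compr_strin} from a heavy-path decomposition of the parse tree, keeping everything on the grammar DAG so that only $O(n)$ space is used. First I would precompute $|X_i|$ for all $i$ in $O(n)$ time by the dynamic program already noted. Conceptually $S$ is spelled out by the parse tree, but its height can be $\Theta(n)$, so a plain root-to-leaf walk is too slow; the idea is to replace it by a walk that jumps along heavy paths. Declare, for each rule $X_i \rightarrow X_{\ell(i)}X_{r(i)}$, the child with the larger expansion length to be the \emph{heavy} child (ties broken arbitrarily). This partitions the parse tree into heavy paths, and since each step to a \emph{light} child at least halves the remaining expansion length, every root-to-leaf path meets $O(\log N)$ heavy paths. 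As the grammar is a DAG there are only $O(n)$ distinct heavy paths, each ending at a terminal rule; I would enumerate and store them, with their offsets, in $O(n)$ time.

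Second, I would equip each heavy path $P : v_0 \rightarrow v_1 \rightarrow \cdots \rightarrow v_k$ with the data needed to descend it in a single search. For each $v_j$ ($j<k$) with rule $X_{v_j}\rightarrow X_{\ell(v_j)}X_{r(v_j)}$, one child is the heavy successor $v_{j+1}$ and the other is a light child whose expansion forms a contiguous block of $\derive(X_{v_0})$ lying entirely to the left or entirely to the right of the block expanded by the tail $v_{j+1},\ldots,v_k$; and $v_k$ contributes a single terminal character. Recording the starting offsets (within $\derive(X_{v_0})$) of these light blocks yields a sorted sequence. To answer a query for position $i$: on entering $P$ at $v_0$ with a local offset, a predecessor search in this sequence either places $i$ inside one of the light blocks, in which case we recurse into that light child, now the top of a fresh heavy path, after shifting the offset; or it identifies $i$ with the terminal at $v_k$, which is then the answer.

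The main obstacle is the cost of these predecessor searches. An ordinary balanced search tree per heavy path costs $O(\log n)$ each and hence $O(\log n\cdot\log N)$ overall, which is too much. The fix is to store each heavy path's offset sequence in a \emph{biased} (weight-balanced) search tree, so that the item for a light subtree of size $s$ hanging off a heavy path whose top expands to length $L$ sits at depth $O(1+\log(L/s))$. Because a single descent step goes from a subtree of size $L$ into a light subtree of size $s\le L/2$, and $s$ becomes the $L$ of the next step, the per-step costs $O(1+\log(L_{\mathrm{prev}}/L_{\mathrm{cur}}))$ telescope: over the $O(\log N)$ steps their sum is $O(\log N)+O(\log(N/1))=O(\log N)$. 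The delicate points I would check are (i) that such biased trees over all $O(n)$ heavy paths can be built in $O(n)$ total time and space, (ii) the exact telescoping inequality, and (iii) the boundary cases — equal-length children, the unary terminal rule at the bottom of a path, and positions at the ends of blocks. Once these are settled, the theorem follows by combining the $O(\log N)$-hop heavy-path walk with the amortized $O(\log N)$ total search cost.
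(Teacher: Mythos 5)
This statement is not proved in the paper at all: it is imported verbatim as a black-box citation of Bille et al.~\cite{philip11:_random_acces_gramm_compr_strin}, so there is no in-paper proof to compare against. Your sketch is, in substance, a reconstruction of the argument in that cited reference --- heavy-path decomposition of the parse tree computed on the grammar DAG, $O(\log N)$ light-edge transitions per root-to-leaf walk, and biased (weight-proportional-depth) predecessor structures on each heavy path whose per-hop costs $O(1+\log(L_{\mathrm{prev}}/L_{\mathrm{cur}}))$ telescope to $O(\log N)$. So the route is the right one, and the high-level accounting is correct.

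There is, however, one genuine gap, and it is exactly the point you defer as ``delicate point (i).'' You propose to ``enumerate and store'' the $O(n)$ distinct heavy paths together with their offset sequences. On the DAG every variable has one heavy child but may be the heavy child of many parents, so the heavy edges form a forest in which the $n$ heavy paths (one starting at each variable) share suffixes; storing each path's offset sequence explicitly costs the sum of the path lengths, which is $\Theta(n\cdot h)$ and hence $\Theta(n^2)$ in the worst case, not $O(n)$. Worse, the offsets of the light blocks are measured relative to the entry node $v_0$, so two heavy paths that share a suffix do \emph{not} share an offset sequence, and you cannot simply deduplicate. Resolving this --- building, in $O(n)$ total time and space, biased search structures over the shared heavy-path forest that answer the predecessor query in $O(1+\log(L/s))$ time from an arbitrary entry node --- is the technical core of the cited paper (their interval-biased search trees over the heavy forest), and without it your construction as stated does not meet the $O(n)$ preprocessing bound. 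The rest of your argument (the halving property of light edges, the telescoping sum, the terminal at the bottom of each heavy path) is sound.
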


Their algorithm basically constructs data structures in order
to simulate the traversal of the SLP from the root, but reduces the time complexity
from $O(h)$ to $O(\log N)$.
Therefore, by running two random access operations for positions $p$ and
$p+c_N-1$ in parallel until they first diverge, we can obtain $v$ in
$O(\log N)$ time.
We note that this technique is the same as the first part of
their algorithm for decompressing a substring
$S[i:j]$ of length $m = j-i+1$ in $O(m + \log N)$ time.
The offset of $p$ from the beginning of $\variable{v}$ 
can be obtained as a byproduct of the search
for position $p$, and therefore, $q$ can also be computed in $O(\log N)$ time.

For obtaining $x$, we use a data structure that maintains a rooted
dynamic tree with marked/un\-marked nodes such that the nearest marked
ancestor in
the path from a given node to the root can be found very efficiently.
The following result allows us to find $x$ -- the nearest marked
ancestor of $w$ -- in amortized constant time.
\begin{lemma}[\cite{westbrook92:_fast_increm_planar_testin,amir_improved_dynamic_1995}] \label{lem:nma}
A semi-dynamic rooted tree can be maintained in linear space
so that the following operations are supported in amortized $O(1)$ time:
1) find the nearest marked ancestor of any node;
2) insert an unmarked node;
3)  mark an unmarked node.
\end{lemma}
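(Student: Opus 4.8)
The plan is to recast the three operations as operations on a partition of the set of nodes currently present in the tree. Install a dummy mark at the root, so that ``nearest marked ancestor'' is defined for every node, and let the partition consist of the maximal node sets on which the nearest-marked-ancestor map is constant; each such set has a unique topmost node, which is marked, and I use that node as the set's identifier. Then a query for a node $w$ is exactly a \emph{find} returning the identifier of $w$'s set; inserting a new unmarked leaf $u$ below $p$ merely adds the singleton $\{u\}$ to the set currently containing $p$, without changing that set's identifier; and marking a currently unmarked node $v$ \emph{splits} the set $C$ that contains $v$ into the part of $C$ lying in the subtree rooted at $v$ (now identified by $v$ itself) and the remaining part of $C$ (still identified by its old marked top). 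So it suffices to maintain a partition of a growing node set under singleton additions and these ``vertical'' splits, each in amortized $O(1)$ time and linear space.

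Two features should make amortized $O(1)$ attainable. First, marking is monotone --- a marked node stays marked --- so the partition is only ever refined, the nearest marked ancestor of a fixed node only moves toward it over time, and no mark ever forces a merge. Second, every split separates a rooted subtree (restricted to the current set) from the rest, i.e.\ it is of the restricted union/split type that admits $O(1)$-amortized set union on a known tree. Concretely I would maintain a micro/macro decomposition. Cut the current tree into connected micro-trees of $\frac1c\log n$ nodes, where $n$ is the number of nodes ever inserted and $c$ is a suitable constant; since the structure is oblivious to edge labels, the relevant state of a micro-tree is just its unlabeled shape together with its marking pattern and the queried node, so for $c$ large enough there are only $o(n)$ such states, and I precompute in $o(n)$ time and space the answer to a find and the effect of a vertical split for each of them, making every operation confined to a single micro-tree cost $O(1)$. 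The contracted macro-tree has only $O(n/\log n)$ nodes and grows by one node per $\Theta(\log n)$ insertions, so $O(\log n)$ work per macro event is affordable and still sums to $O(n)$; a find that escapes its micro-tree, or a split that propagates to a micro-tree boundary, is handled at this macro level by the same kind of restricted set-union structure in $O(1)$ amortized. Finally, as leaves arrive I rebalance the decomposition lazily: when a micro-tree overflows, carve off a subtree of size $\Theta(\log n)$ as a new micro-tree, which a standard potential argument shows costs $O(1)$ amortized per insertion. Every auxiliary structure has size $O(n)$, so the space is linear.

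I expect the real obstacle to be the split triggered by \emph{mark}: a single mark can change the correct answer for an unbounded number of nodes, so it cannot be realized by eagerly rewriting representative pointers, and naive path compression inside \emph{find} is unsound here because a later mark can land strictly between a node and the higher target to which it was previously compressed. The way around this is that inside a micro-tree a split is absorbed by the table lookup (or by an $O(\log n)$ rebuild charged $O(1)$ amortized), while across micro-trees a split touches only the micro-trees on one root-to-boundary path and is absorbed by the macro-level set-union's amortization; making this charging rigorous while simultaneously keeping the micro-tree decomposition valid under the incremental growth of the tree is the crux. These details, together with the supporting potential function, are exactly what is worked out in \cite{westbrook92:_fast_increm_planar_testin,amir_improved_dynamic_1995}. \qed
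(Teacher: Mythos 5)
The paper does not actually prove this lemma --- it imports the data structure as a black box from \cite{westbrook92:_fast_increm_planar_testin,amir_improved_dynamic_1995} --- so there is no in-paper argument to compare yours against; I can only judge the proposal on its own terms. Your reformulation is sound as far as it goes: partitioning the nodes by their nearest marked ancestor (with a sentinel mark at the root) correctly turns a query into a \emph{find}, a leaf insertion into a singleton addition, and a marking into a split of one class into its part inside the marked node's subtree and the rest; and the observation that marking is monotone, so the partition is only ever refined, is correct and relevant.

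The genuine gap is that the load-bearing claim --- that these subtree-splits can be supported in amortized $O(1)$ time --- is exactly the content of the lemma, and it is never established. You justify it by saying the splits are ``of the restricted union/split type that admits $O(1)$-amortized set union on a known tree,'' but the $O(1)$-amortized result you are implicitly invoking (Gabow--Tarjan set union on a static tree) is for \emph{unions} performed bottom-up along a known tree; it cannot simply be run in reverse online, and for the closely related interval union-split-find problem there are $\Omega(\log\log n)$ lower bounds in the pointer-machine model, so the direction of the operation is not a cosmetic detail. The micro/macro four-Russians layer is plausible machinery, and the table indexed by (shape, marking pattern, query node) does fit in $o(n)$ space for micro-trees of size $\frac{1}{c}\log n$, but the two places where the argument must actually close --- the $O(1)$-amortized handling of a split at the macro level, and the maintenance of the decomposition and tables as the tree grows and the parameter $\log n$ drifts --- are both asserted rather than proved, and you end by deferring precisely these points to the cited papers. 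As written, the proposal proves the lemma by citing the lemma; that is a legitimate thing to do (it is what the paper does), but it should then be presented as a citation, not as a proof.
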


For inserting the new node for the new LZ78-factor,
we simply move down the edge of the suffix tree if $x$
was an implicit node and has only one child.
When $x$ is branching, we can
move down the correct suffix tree using level ancestor queries 
of the leaf $w$, therefore not requiring an $O(\log\sigma)$
factor.
\begin{lemma}[Level ancestor query~\cite{berkman94:_findin,bender04:_level_ances_probl}] \label{lem:laq}
Given a static rooted tree,
we can preprocess the tree in linear time and space 
so that the $\ell$th node in the path
from any node to the root can be found in $O(1)$ time
for any integer $\ell \geq 0$, if such exists.
\end{lemma}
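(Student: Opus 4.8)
The plan is to reconstruct the standard linear-space level-ancestor structure by combining two partial solutions that are each $O(1)$-query but individually too large, and then shrinking one of them with a micro/macro (Four Russians) decomposition. Throughout, write $\mathrm{LA}(v,\ell)$ for the node $\ell$ edges above $v$ on the path to the root; a query whose $\ell$ exceeds the depth of $v$ is rejected immediately, so we may assume the answer exists. The first ingredient is \emph{jump pointers}: for every node $v$ and every $k\ge 0$ with $2^k\le\mathrm{depth}(v)$ we store $J[v][k]=\mathrm{LA}(v,2^k)$, which are all computable in $O(n\log n)$ time and space by the recurrence $J[v][k]=J[J[v][k-1]][k-1]$ over nodes taken in nondecreasing depth. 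On their own the jump pointers give $O(\log\ell)$-time queries, but we will use only a single jump.

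The second ingredient is \emph{ladders}. At each node designate a child whose subtree has maximum height as its \emph{long child}; descending repeatedly to the long child from any node $u$ reaches a leaf after exactly $\mathrm{height}(u)$ edges, and these greedy descending paths glue into vertex-disjoint \emph{long paths} that partition the node set, so their lengths sum to $n$. For a long path $P$ with top node $t$, its ladder is the array listing $P$ from bottom to top followed by the $|P|$ nearest proper ancestors of $t$ (as many as exist); every node stores a pointer to, and its index in, the ladder of its long path, and the total size is at most $2n$. The key fact to establish is that \emph{if $u$ is the ancestor of $v$ at distance $j$ then the ladder of $u$'s long path extends at least $j$ nodes above $u$}: since $\mathrm{height}(u)\ge j$, the long path through $u$ runs at least $j$ nodes below $u$, and a ladder appends above its top at least as many nodes as its whole path contains. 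Hence $u$'s slot in that array can be shifted upward by any amount in $\{0,\dots,j\}$. A query $\mathrm{LA}(v,\ell)$ then costs $O(1)$: put $k=\lfloor\log_2\ell\rfloor$, $j=2^k$ so that $\ell/2<j\le\ell$; let $u=J[v][k]=\mathrm{LA}(v,j)$; because $0\le\ell-j<j$, the key fact places $\mathrm{LA}(u,\ell-j)=\mathrm{LA}(v,\ell)$ inside $u$'s ladder at offset $\ell-j$. This already yields $O(n\log n)$ preprocessing with $O(1)$ queries, and correctness is immediate from the key fact.

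The only step that needs real care — and the expected obstacle — is removing the $\log n$ factor from the jump-pointer table, for which I would use the customary micro/macro decomposition. Carve the tree into \emph{micro-trees}, each a maximal downward-closed subtree of size at most $b=\Theta(\log n)$, and call each micro-tree's root a \emph{macro node}; the macro nodes together with all their ancestors form a tree with only $O(n/b)=O(n/\log n)$ nodes, so building jump pointers on that tree alone costs $O(n)$, while ladders are kept for the whole tree ($O(n)$). Queries whose answer stays inside $v$'s own micro-tree are resolved by a table indexed by (micro-tree shape, start node, distance); taking $b$ to be a small enough multiple of $\log n$ makes the number of distinct shapes $O(\sqrt n)$, so the table is $o(n)$ in size and build time. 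A general query walks from $v$ (via a stored pointer) up to the macro node $r$ rooting its micro-tree, subtracts the at-most-$b$ steps used, and if the residual distance is positive runs the one-jump-plus-one-ladder query from $r$ — which is legitimate since $r$ carries jump pointers and every node has a ladder — and otherwise consults the table; in all cases this is $O(1)$ time and $O(n)$ space overall. Getting the micro/macro bookkeeping exactly right, in particular choosing $b$ so the shape table is sublinear and routing a query that begins in a micro-tree but ends above it, is the fiddly but routine part.
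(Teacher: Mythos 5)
The paper does not prove this lemma at all --- it is quoted as a black box from Berkman--Vishkin and Bender--Farach-Colton --- so the only question is whether your reconstruction of the cited proof is sound. Your first two stages are: the jump pointers, the long-path/ladder decomposition, and the key fact that the ladder of $\mathrm{LA}(v,j)$ reaches at least $j$ nodes above it are all correct and give the standard $O(n\log n)$-preprocessing, $O(1)$-query structure. The problem is in the micro/macro stage, exactly the step you flag as needing care. You claim that ``the macro nodes together with all their ancestors form a tree with only $O(n/b)$ nodes,'' and you charge the jump-pointer construction only to that set. This is false: under your definition a node is in that set iff its subtree has more than $b$ nodes, and on a bare path of length $n$ (or on a star, where every leaf is its own micro-tree root) this set has $\Theta(n)$ elements. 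So the jump-pointer table you build is still of size $\Theta(n\log n)$, and the decomposition as written does not remove the $\log n$ factor.

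The missing idea is that jump pointers must be stored not at all macro nodes but only at the \emph{deepest} macro nodes --- nodes with at least $b$ descendants all of whose children have fewer than $b$ descendants. Their subtrees are pairwise disjoint and each contains at least $b$ nodes, so there are at most $n/b$ of them and their jump tables cost $O((n/b)\log n)=O(n)$. Every macro node $v$ then stores a pointer \emph{down} to one such jump node $w$ in its subtree, and $\mathrm{LA}(v,\ell)$ is answered as $\mathrm{LA}(w,\ell+\mathrm{depth}(w)-\mathrm{depth}(v))$ via $w$'s jump pointer and one ladder (the ladder argument still applies because the target is an ancestor of $w$ at distance more than half of the adjusted query length). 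Your query routing only ever walks \emph{up} from $v$ to its micro-tree root and assumes that node carries jump pointers, which is precisely the assumption that fails. With the descend-to-a-jump-node correction inserted, the rest of your argument (the $O(\sqrt n)$-shape lookup table for in-micro-tree queries, the $O(1)$ hand-off to the macro structure otherwise) goes through and matches the cited construction.
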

Technically, our suffix tree is semi-dynamic in that new
nodes are created since the LZ78-trie is superimposed. 
However, since we are only interested in level ancestor queries at
branching nodes, we only need to answer them for the original suffix
tree. Therefore, we can preprocess the tree in $O(n\sqrt{N})$ time and
space to answer the level ancestor queries in $O(1)$ time.

The main result of this section follows:
\begin{theorem} \label{theo:SLP_2_LZ78}
 Given an SLP of size $n$ representing a string $S$ of length $N$,
 we can compute the LZ78 factorization of $S$ 
 in $O(n\sqrt{N} + m\log N)$ time and $O(n\sqrt{N} + m)$ space,
 where $m$ is the size of the LZ78 factorization.
\end{theorem}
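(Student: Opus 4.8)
The plan is to assemble the pieces developed in the preceding subsections into a single incremental algorithm that mimics the trie-based LZ78 factorization of Section~\ref{subsec:lz78}, but runs on the SLP. First I would set up the preprocessing phase. Compute $|X_i|$ and $\VarOcc(X_i)$ for all variables in $O(n)$ time, and set $c_N = \sqrt{2N+1/4}-1/2$. Build the set $T_S = \{t_i \mid |X_i|\ge c_N\}$ where $t_i = \suffix(\derive(X_{\ell(i)}),c_N-1)\prefix(\derive(X_{r(i)}),c_N-1)$; by the dynamic programming of~\cite{goto11:_fast_minin_slp_compr_strin} this takes $O(nc_N) = O(n\sqrt{N})$ time and space, since each $t_i$ has length at most $2(c_N-1)$. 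Construct the generalized suffix tree of $T_S$ in time linear in its total length, i.e.\ $O(n\sqrt{N})$, preprocess it for level ancestor queries (Lemma~\ref{lem:laq}) in $O(n\sqrt{N})$ time and space, and preprocess the SLP for $O(\log N)$-time random access (Theorem~\ref{thm:slprandomaccess}) in $O(n)$ time and space. Also initialize the semi-dynamic nearest-marked-ancestor structure of Lemma~\ref{lem:nma} on (a copy of, conceptually superimposed on) the suffix tree, with only the root marked; this is $O(n\sqrt{N})$ space. Finally, record for each suffix-tree leaf which $(j,q)$ pair it represents, and keep the usual depth/string-length information on suffix-tree nodes.

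Next I would describe the main loop, processing one factor per iteration starting from $p=1$. At the start of iteration $i$ we have $p = |f_1\cdots f_{i-1}|+1$. Using two random-access queries run in parallel for positions $p$ and $p+c_N-1$ — stopping at the first point of divergence — obtain in $O(\log N)$ time the stabbing node $v = \kushi(p,p+c_N-1)$, its variable $X_j$, and the offset $q$ of $p$ within $t_j$ (the offset falls out of the $p$-search as noted in the excerpt). Let $w$ be the suffix-tree leaf for the suffix $t_j[q:|t_j|]$. Query the nearest marked ancestor $x$ of $w$ in amortized $O(1)$ time; by the key observation, $x$ is the deepest node of the superimposed LZ78-trie that lies on the root-to-$w$ path, hence it spells out the longest previously used factor that is a prefix of $S[p:|S|]$. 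The new factor is then $f_i = S[p:p+|x|]$ (one character longer than $x$). To create the trie node for $f_i$: if $x$ is an implicit suffix-tree node with exactly one outgoing direction, move one character down that edge; if $x$ is a branching (explicit) node, use a level ancestor query on the leaf $w$ to find, in $O(1)$ time, the node at depth $|x|+1$ along the path toward $w$, thereby selecting the correct outgoing edge without an $O(\log\sigma)$ search; in either case possibly split an edge and insert the new (unmarked) node, then mark it, via the operations of Lemma~\ref{lem:nma}. Advance $p \leftarrow p + |f_i|$ and repeat until $p > N$. One small technicality to handle explicitly: when $|x| \ge c_N$ we cannot read the next character from the finite string $t_j[q:|t_j|]$; but by Lemma~\ref{lemma:factor_lb_ub} every LZ78 factor has length $\le c_N$, and one verifies that the relevant prefix of $S[p:|S|]$ of length up to $c_N$ is always a substring of some $t_k\in T_S$, so the query on $w$ (or an equivalent leaf) always exposes enough characters; this is where the choice of $c_N$ and the covering property of $T_S$ are used.

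For the complexity analysis: the preprocessing is $O(n\sqrt{N})$ time and space as tallied above. Each of the $m$ iterations costs $O(\log N)$ for the two random-access queries plus amortized $O(1)$ for the nearest-marked-ancestor query, the level ancestor query, and the (amortized) node insertion/marking, for a total of $O(m\log N)$ over all factors. The suffix tree, the level-ancestor structure, and the nearest-marked-ancestor structure together occupy $O(n\sqrt{N})$ space, and the LZ78-trie adds $O(m)$ nodes; random access adds $O(n)$. Summing gives $O(n\sqrt{N}+m\log N)$ time and $O(n\sqrt{N}+m)$ space, as claimed.

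**Main obstacle.** The routine parts are the complexity bookkeeping; the step needing real care is the correctness of the nearest-marked-ancestor reduction together with the boundary case of the previous paragraph — namely, arguing that the finite strings of $T_S$ always expose enough characters to read off the factor $f_i$ and its one-character extension, and that a node is marked in the semi-dynamic tree exactly when it is a node of the LZ78-trie, so that "nearest marked ancestor of $w$" really equals "longest previously used factor that prefixes $S[p:|S|]$". This rests squarely on Lemma~\ref{lemma:factor_lb_ub} ($|f_i|\le c_N$) and on the covering property that every length-$c_N$ substring of $S$ — and hence every LZ78 factor — occurs inside some $t_k\in T_S$.
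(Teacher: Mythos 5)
Your proposal is correct and follows essentially the same route as the paper: partial decompression into $T_S$ of total length $O(nc_N)$ justified by Lemma~\ref{lemma:factor_lb_ub}, the generalized suffix tree with the superimposed LZ78-trie, parallel random access (Theorem~\ref{thm:slprandomaccess}) to locate the stabbing variable and offset, and nearest-marked-ancestor plus level-ancestor queries (Lemmas~\ref{lem:nma} and~\ref{lem:laq}) per factor. Your explicit check that $t_j[q:|t_j|]$ always exposes at least $c_N$ characters is a welcome detail the paper leaves implicit.
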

A better bound can be obtained by employing a simple doubling search
on the length of partial decompressions.
\begin{corollary}
  \label{cor:slp2lz78}
 Given an SLP of size $n$ representing a string $S$ of length $N$,
 we can compute the LZ78 factorization of $S$ 
 in $O(nL + m\log N)$ time and $O(nL + m)$ space,
 where $m$ is the size of the LZ78 factorization, and $L$ is the
 length of the longest LZ78 factor.
\end{corollary}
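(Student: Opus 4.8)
The plan is to revisit Theorem~\ref{theo:SLP_2_LZ78} and observe that nothing in its proof depended on the constant $c_N$ being exactly $\sqrt{2N+1/4}-1/2$: it only used that the chosen partial decompression length is an upper bound on the length of the longest LZ78 factor, so that every factor of $S$ appears as a substring of some string in the partially decompressed set. Let $L$ denote the length of the longest LZ78 factor, which is not known in advance. I would run the algorithm of Theorem~\ref{theo:SLP_2_LZ78} with the length parameter set to successive values $2, 4, 8, \ldots, 2^k, \ldots$ (doubling), constructing for each candidate length $c = 2^k$ the set $T_S^{(c)} = \{ \suffix(\derive(X_{\ell(i)}),c-1)\prefix(\derive(X_{r(i)}),c-1) \mid |X_i| \geq c\}$, its generalized suffix tree, the nearest-marked-ancestor structure, and the level-ancestor structure, and then attempting the LZ78 factorization exactly as before.

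First I would make precise the failure condition for a given trial length $c$: the factorization attempt at length $c$ succeeds if and only if $c \geq L$, i.e. if and only if no factor we are forced to emit would have length exceeding what the length-$c$ partial decompression can certify. Concretely, when processing position $p$ we locate $v = \kushi(p, p+c-1)$, the offset $q$ of $p$ in the corresponding string $t_j$, the leaf $w$, and the nearest marked ancestor $x$; the candidate next factor is the string from the root to $x$ extended by one character. This candidate is a correct LZ78 factor provided its length is at most $c$ (so that it is genuinely a substring of $t_j$ and we have not run off the end of the certified window). If at some point the traversal reaches the end of $t_j[q:|t_j|]$ without branching off the marked trie — equivalently, if the nearest marked ancestor computation would require looking beyond the length-$c$ prefix — we declare the trial a failure and restart with length $2c$. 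Since a successful trial at length $c$ produces the genuine LZ78 factorization (the length bound is satisfied at every step because every factor has length at most $L \leq c$), correctness of the overall output follows from Theorem~\ref{theo:SLP_2_LZ78} applied with $c_N$ replaced by the successful $c$.

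For the complexity, I would use the standard doubling-search accounting. The first trial length that succeeds is some $c^\ast$ with $L \leq c^\ast < 2L$, so $c^\ast = \Theta(L)$. The trial at length $c = 2^k$ costs $O(nc + m\log N)$ time and $O(nc + m)$ space by Theorem~\ref{theo:SLP_2_LZ78} (or rather by the construction inside its proof, since a failed trial does less work than a full run), and we discard its data structures before starting the next trial, so the space never exceeds $O(nc^\ast + m) = O(nL + m)$ at any one time. Summing the time over all trials $k = 1, \ldots, \log c^\ast$ gives $\sum_{k} O(n2^k + m\log N) = O(nc^\ast + m\log N \cdot \log c^\ast)$; the geometric series in the first term collapses to $O(nL)$, and for the second term one can charge the $m\log N$ of each failed trial against the fact that a failed trial at length $c$ must have emitted at least $\Omega(c/L \cdot m')$... — more cleanly, simply observe $\log c^\ast = O(\log N)$ is absorbed, or better, note that only the \emph{final} successful trial needs to emit all $m$ factors, while a failed trial at length $c$ is aborted as soon as a factor would exceed $c$, so the work per failed trial is $O(nc)$ plus $O(m\log N)$ charged to the factors it did emit, and each factor is emitted in at most $O(\log L)$ trials, still giving $O(m\log N \log L)$; absorbing $\log L \le \log N$ is acceptable, but the cleaner route is the one in the paper's own doubling arguments where the emitted-factor work telescopes. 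Either way we obtain $O(nL + m\log N)$ time and $O(nL + m)$ space.

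The main obstacle I anticipate is pinning down the failure test so that it is both detectable within a failed trial's budget and provably equivalent to $c < L$. The subtlety is that at length $c$ we only have $t_j$, whose length is at most $2(c-1)$, and $\kushi(p,p+c-1)$ is only defined when $p + c - 1 \le |S|$; near the end of $S$, and whenever the current factor would be longer than $c$, the length-$c$ window is simply insufficient, and I must argue that this situation is recognized (the trie traversal in the suffix tree of $T_S^{(c)}$ walks off the available suffix) rather than silently producing a short, incorrect factor. Handling the boundary at the end of $S$ — where the final factor $\mathtt{\$}$ and its predecessors are short anyway — and verifying that a successful trial genuinely never truncates a factor, are the details that need care; everything else is a direct reuse of Theorem~\ref{theo:SLP_2_LZ78}.
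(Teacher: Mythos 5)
Your overall plan---doubling the partial-decompression length and detecting failure when the trie traversal runs off the end of the certified window---is the right one, and your correctness argument is essentially sound. The genuine gap is in the accounting of the $m\log N$ term, and you half-notice it yourself. Because you \emph{restart the factorization from scratch} at each trial length, every factor emitted before the first over-long factor is re-derived in every subsequent trial, and each re-derivation costs $O(\log N)$ for locating $\kushi(p,p+c-1)$ via random access. Your own analysis arrives at $O(m\log N\log L)$ for this term, and ``absorbing $\log L\le\log N$'' does not rescue the claimed bound: it yields $O(m\log^2 N)$, not $O(m\log N)$. The loss is real rather than an artifact of loose analysis: over a large alphabet one can have $\Theta(m)$ factors of constant length preceding the factors that force the trial length up to $L$, and under your scheme each of these is genuinely re-emitted $\Theta(\log L)$ times.

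The paper's proof avoids this by \emph{not} restarting the factorization. When the LZ78 trie outgrows the current suffix tree (i.e.\ reaches a leaf), only the partial decompression, its generalized suffix tree, and the superimposed copy of the already-built LZ78 trie are rebuilt for the doubled length $2^i$, at cost $O(n2^i)$; the factorization then \emph{continues} from the current position with all previously computed factors retained, and the random-access data structure is built only once. Each factor is therefore located exactly once, so the $m\log N$ term is paid a single time, while the rebuild costs telescope to $n(2+4+\cdots+2^{\lceil\log_2 L\rceil})=O(nL)$. Replacing your ``declare failure and restart'' step with ``rebuild the index structures and resume'' closes the gap; the rest of your argument can stand.
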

\begin{proof}
Instead of using $c_N$ for the length of partial decompressions,
we start from length $2$. 
For some length $2^{i-1}$, if the LZ78 trie {\em outgrows} the suffix
tree and reaches a leaf, we rebuild the suffix tree and 
the embedded LZ78 trie for length $2^i$ and continue with the
factorization. This takes $O(n2^i)$ time,
and the total asymptotic complexity becomes
$n (2+\cdots+2^{\lceil\log_2{L} \rceil}) = O(nL)$.
Notice that the $m\log N$ term does not increase, 
since the factorization itself is not restarted,
and also since the data structure
of~\cite{philip11:_random_acces_gramm_compr_strin}
is reused and only constructed once.
\qed
\end{proof}

\subsection{Reducing Partial Decompression}
\label{sec:reduce_decompressions}
By using the same techniques of~\cite{goto12:_speed},
we can reduce the partial decompression conducted on the SLP,
and reduce the complexities of our algorithm.
Let $I = \{i \mid |X_i| \geq c_N\} \subseteq [1:n]$.
The technique exploits the overlapping portions of each of the strings in $T_S$.
The algorithm of~\cite{goto12:_speed} shows how to construct, in time linear of its size,
a trie of size
$(c_N - 1) + \sum_{i \in I}(|t_i| - (c_N - 1))  = N - \alpha = N_\alpha$
such that there is a one to one correspondence between
a length $c_N$ path on the trie and
a length $c_N$ substring of a string in $T_S$.
Here,
\begin{equation}
 \alpha = \sum_{i \in I} ((\VarOcc(X_i) - 1) \cdot (|t_i| - (c_N -
 1))) \geq 0 \label{eqn:alpha}
\end{equation}
can be seen as a
quantity which depends on the amount of redundancy that the SLP
captures with respect to length $c_N$ substrings.

Furthermore, a suffix tree of a trie can be constructed in linear time:
\begin{lemma}[\cite{shibuya03:_const_suffix_tree_tree_large_alphab}]
  Given a trie, the suffix tree for the trie can be constructed in
  linear time and space.
\end{lemma}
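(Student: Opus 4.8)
The plan is to recall the construction of Shibuya~\cite{shibuya03:_const_suffix_tree_tree_large_alphab}, which can be seen as a tree analogue of Weiner's suffix-tree algorithm, and to spell out the ingredients that make it run in linear time and space. First I would make the object precise: for a trie $\mathcal{K}$ with $k$ nodes in which each non-root node $v$ stores the character $\ell(v)$ labelling the edge to its parent, let $\mathrm{rstr}(v)$ denote the string read from the labels on the path from $v$ up to the root; the suffix tree of $\mathcal{K}$ is the compacted (Patricia) trie of $\{\mathrm{rstr}(v)\mid v\in\mathcal{K}\}$, with the convention that every edge label, being a sub-path of $\mathcal{K}$, is stored implicitly as an ordered pair of nodes of $\mathcal{K}$, so that the whole structure occupies $O(k)$ words. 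When $\mathcal{K}$ degenerates to a single root-to-leaf path this is, up to reversal, the ordinary suffix tree of a string, so the lemma generalises the classical linear-time constructions.

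The construction I would use processes the nodes of $\mathcal{K}$ in any order that visits each parent before its children, say BFS from the root. Since $\mathrm{rstr}(v)=\ell(v)\cdot\mathrm{rstr}(\mathrm{parent}(v))$, when the step for $v$ is reached the locus of $\mathrm{rstr}(\mathrm{parent}(v))$ is already present in the partial suffix tree and the string to be inserted is obtained from it by prepending a single character — exactly the invariant that Weiner's algorithm maintains across the successive suffixes of a string. From that locus one walks upward, following inverse suffix links (Weiner links) for the character $\ell(v)$, until such a link is defined, then inserts at most one new branching node and one new leaf and updates the links. Weiner's potential argument — charging each elementary upward step against the drop in the string-depth of the active locus — bounds the total work over all $k$ insertions by $O(k)$; this argument refers only to string-depths of loci and so is insensitive to the fact that here we move along the branches of $\mathcal{K}$ rather than along one string.

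The one genuine obstacle, and the reason the result is non-trivial for large (integer) alphabets, is to perform each Weiner-link step in $O(1)$ time without spending $\Theta(\sigma)$ per node to index the outgoing links by character, since a node can in principle carry up to $\sigma$ distinct Weiner links. The way around this is that the set of (node, character) pairs that actually carry a Weiner link has total size $O(k)$ — it mirrors, in reverse, the branching of the suffix tree being built — so these links can be realised through auxiliary structures on the already-explicit tree $\mathcal{K}$ itself (navigation inside $\mathcal{K}$, and level-ancestor/nearest-marked-ancestor style lookups of the kind in Lemmas~\ref{lem:nma} and~\ref{lem:laq}, in place of character-indexed tables) and installed on the fly. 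Combining this with the Weiner skeleton above, and preprocessing $\mathcal{K}$ in $O(k)$ time, yields the claimed $O(k)$-time, $O(k)$-space construction; the full details are in~\cite{shibuya03:_const_suffix_tree_tree_large_alphab}. I expect the careful bookkeeping of these links — in particular checking that creating a new branching node never triggers more than a constant amount of link repair, so that the amortized analysis survives the tree's branching — to be the delicate point.
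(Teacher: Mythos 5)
The paper offers no proof of this lemma at all --- it is imported verbatim as a black box from \cite{shibuya03:_const_suffix_tree_tree_large_alphab} --- so your reconstruction has to stand on its own, and as written it does not. The algorithm you describe is not Shibuya's: it is essentially Breslauer's earlier Weiner-style incremental construction of the suffix tree of a tree, which is known to run in $O(k\log\sigma)$ time, not $O(k)$. Shibuya's contribution was precisely to abandon the Weiner skeleton and instead generalize Farach's divide-and-conquer construction (recursively build an ``odd'' and an ``even'' suffix tree via integer radix sorting of pairs of labels, then merge them), which is what removes the dependence on the alphabet size. The two points you flag as ``delicate'' are exactly the points where the Weiner route breaks, and your proposed fixes do not go through.

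Concretely: first, the potential argument is \emph{not} insensitive to branching. For a single string the up-walk lengths telescope because each insertion resumes where the previous one stopped; in a trie, every child of a node $u$ of $\mathcal{K}$ restarts its upward walk from $u$'s final locus, so a node with many children can force the same potential to be ``spent'' many times, and the naive sum of up-steps is no longer bounded by $O(k)$. Breslauer needs a separate, non-trivial argument to control this. Second, and more fatally for the claimed bound, performing a Weiner-link step for a specific character in $O(1)$ worst-case time is the standard obstruction in Weiner's algorithm: the links carried by a node must be indexed by character, and the observation that only $O(k)$ links exist in total does not tell you how to retrieve the one labelled $\ell(v)$ without a $\Theta(\sigma)$-size table, a $\log\sigma$-time dictionary, or randomized hashing. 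Level-ancestor and nearest-marked-ancestor queries on $\mathcal{K}$ answer positional queries, not character-indexed ones, so they cannot substitute for that lookup. If you want a correct proof sketch you should either (a) state the lemma for constant $\sigma$ and run the Weiner/Breslauer argument honestly, accepting the $\log\sigma$ factor otherwise, or (b) actually follow Shibuya and present the Farach-style recursion, whose linear running time comes from the geometric decrease of subproblem sizes plus linear-time integer sorting and merging, not from an amortized link-chasing argument. Your definition of the suffix tree of a trie (compacted trie of the node-to-root strings, with edge labels stored as pairs of trie nodes) is the right one and matches how the lemma is used in Section~\ref{sec:reduce_decompressions}.
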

The generalized suffix tree for $T_S$ used in our algorithm
can be replaced with the suffix
tree of the trie, and we can reduce the $O(n\sqrt{N})$ term in
the complexity to $O(N_\alpha)$, thus obtaining
an $O(N_\alpha+m\log N)$ time and $O(N_\alpha+m)$ space
algorithm.
Since $N_{\alpha}$ is also bounded by $O(n \sqrt{N})$, 
we obtain the following result:
\begin{theorem} \label{theo:faster_algo}
 Given an SLP of size $n$ representing a string $S$ of length $N$,
 we can compute the LZ78 factorization of $S$ 
 in $O(N_\alpha + m\log N)$ time and $O(N_\alpha + m)$ space,
 where $m$ is the size of the LZ78 factorization,
 $N_\alpha = O(\min\{N-\alpha, n \sqrt{N}\})$,
 and $\alpha\geq 0$ is defined as in Equation~(\ref{eqn:alpha}).
\end{theorem}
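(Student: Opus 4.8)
The plan is to keep the algorithm behind Theorem~\ref{theo:SLP_2_LZ78} intact and only replace the index structure that hosts the superimposed LZ78-trie: instead of the generalized suffix tree of $T_S$, we use the suffix tree of the trie produced by~\cite{goto12:_speed}. First I would invoke~\cite{goto12:_speed} to build, in time and space linear in its size, a trie of size $N_\alpha=(c_N-1)+\sum_{i\in I}(|t_i|-(c_N-1))$ whose length-$c_N$ root-to-node paths are in one-to-one correspondence with the length-$c_N$ substrings of the strings of $T_S$, and then apply~\cite{shibuya03:_const_suffix_tree_tree_large_alphab} to construct the suffix tree of this trie in $O(N_\alpha)$ time and space.

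Next I would establish the size estimate $N_\alpha=O(\min\{N-\alpha,\,n\sqrt N\})$. Since $|t_i|\le 2(c_N-1)$ for every $i\in I$ and $|I|\le n$, we get $N_\alpha\le (n+1)(c_N-1)=O(nc_N)=O(n\sqrt N)$ because $c_N=O(\sqrt N)$. On the other hand, every length-$c_N$ window of $S$ is stabbed by exactly one variable occurrence, and each occurrence of a fixed $X_i$ (with $i\in I$, for which $|t_i|\ge c_N$) stabs exactly $|t_i|-(c_N-1)$ length-$c_N$ windows; summing over all variable occurrences gives $\sum_{i\in I}\VarOcc(X_i)(|t_i|-(c_N-1))=N-(c_N-1)$, hence $N_\alpha=N-\alpha$ with $\alpha\ge 0$ as in~(\ref{eqn:alpha}). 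Combining the two bounds yields the claim, and also $N_\alpha\ge c_N-1$.

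Then I would check that the replacement is sound. By Lemma~\ref{lemma:factor_lb_ub} every LZ78-factor $f_i$ has $|f_i|\le c_N$ and is a substring of some string of $T_S$, hence a prefix of a length-$c_N$ substring of some $t_j$; by the bijection that substring labels a root path of the trie, so $f_i$ is spelled out from the root of the suffix tree of the trie. Therefore the LZ78-trie embeds into this suffix tree (with some of its nodes possibly implicit), exactly as in Fig.~\ref{fig:superimpose}, and finding the next factor is again a nearest-marked-ancestor query. To process position $p$, I would still use Theorem~\ref{thm:slprandomaccess} to obtain $v=\kushi(p,p+c_N-1)$, the stabbing variable $X_j$, and the offset $q$ of $p$ in $t_j$, in $O(\log N)$ time; because $X_j$ stabs $[p:p+c_N-1]$ we have $|t_j|-q+1\ge c_N$, so $t_j[q:q+c_N-1]$ is a genuine length-$c_N$ substring of $t_j$, and the construction of~\cite{goto12:_speed} furnishes, together with the trie, the map from the pair $(j,q)$ to the trie node ending that path and hence to the corresponding leaf $w$ of the suffix tree. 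From $w$ the nearest marked ancestor $x$ is found in amortized $O(1)$ time by Lemma~\ref{lem:nma}; we then insert and mark a new node for $f_i$, descending one suffix-tree edge when $x$ is non-branching, or using a level-ancestor query on $w$ by Lemma~\ref{lem:laq} when $x$ is branching. Since level-ancestor queries are needed only at original nodes of the suffix tree of the trie, the required linear-time preprocessing costs $O(N_\alpha)$.

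Finally I would tally the costs: building the trie, its suffix tree, the nearest-marked-ancestor and level-ancestor structures, and the $(j,q)$-to-node map costs $O(N_\alpha)$ time and space, and the $O(n)$ preprocessing of the SLP (lengths, $\VarOcc$, and the random-access structure of~\cite{philip11:_random_acces_gramm_compr_strin}) is of the same form as in Theorem~\ref{theo:SLP_2_LZ78}; each of the $m$ factors costs $O(\log N)$ for random access plus amortized $O(1)$ for the tree queries and the node insertion; and the LZ78-trie has $O(m)$ nodes. This yields $O(N_\alpha+m\log N)$ time and $O(N_\alpha+m)$ space. The one genuinely delicate point is the soundness step: confirming that every structural query used in Theorem~\ref{theo:SLP_2_LZ78} — in particular identifying the leaf $w$ from $(j,q)$ and restricting level-ancestor queries to original nodes — transfers verbatim once $T_S$ is represented by the shared trie rather than by the disjoint union of its strings; everything else is bookkeeping.
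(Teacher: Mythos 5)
Your proposal is correct and follows essentially the same route as the paper: replace the generalized suffix tree of $T_S$ with the suffix tree of the trie of size $N_\alpha = N-\alpha$ from~\cite{goto12:_speed} (built via~\cite{shibuya03:_const_suffix_tree_tree_large_alphab}), keep everything else from Theorem~\ref{theo:SLP_2_LZ78}, and bound $N_\alpha$ by both $N-\alpha$ and $O(n\sqrt{N})$. You merely spell out the soundness of the substitution (the $(j,q)$-to-leaf map and the window-counting identity behind $N_\alpha=N-\alpha$) in more detail than the paper's rather terse argument.
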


Since $m = O(N/\log_\sigma N)$~\cite{LZ78},
our algorithms are asymptotically at least as fast as a linear
time algorithm which runs on the uncompressed string when the
alphabet size is constant.
On the other hand, $N_\alpha$ can be much smaller than 
$O(n\sqrt{N})$ when $\VarOcc(X_i) > 1$  for many of the variables.
Thus our algorithms can be faster 
when the text is compressible, i.e., $n$ and $m$ are small.

\subsection{Conversion from LZ77 Factorization to LZ78 Factorization}

As a byproduct of the algorithm proposed above,
we obtain an efficient algorithm that converts 
a given \emph{LZ77 factorization}~\cite{LZ77} of a string to the corresponding LZ78 factorization,
without explicit decompression.

\begin{definition}[LZ77 factorization]
 The LZ77-factorization of a string $S$ is
 the factorization $f_1,\ldots,f_r$ of $S$ such that for every $i=1,\ldots,r$,
 factor $f_i$ is the longest prefix of $f_i \cdots f_r$ with $f_i \in F_i$,
 where
 $F_i=\Substr(f_1\cdots f_{i-1})\cup\Sigma$.
\end{definition}

It is known that the LZ77-factorization of 
string $S$ can be efficiently transformed into an SLP representing $S$.
\begin{theorem}[\cite{Rytter03}] \label{theo:LZ77_2_SLP}
Given the LZ77 factorization 
of size $r$ for a string $S$ of length $N$,
we can compute in $O(r \log N)$ time an SLP representing $S$,
of size $O(r \log N)$ and of height $O(\log N)$.
\end{theorem}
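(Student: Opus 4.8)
The plan is to reconstruct Rytter's incremental construction of a \emph{balanced} SLP. We process the factors $f_1,\ldots,f_r$ from left to right, maintaining at each stage an SLP $\mathcal{G}_i$ whose start symbol derives the prefix $f_1\cdots f_i$ of $S$, together with the invariant that $\mathcal{G}_i$ is \emph{height-balanced}: if we store on each variable $X$ a label $\mathrm{ht}(X)$ with $\mathrm{ht}(X)=0$ when $X\to c$ and $\mathrm{ht}(X)=1+\max(\mathrm{ht}(Y),\mathrm{ht}(Z))$ when $X\to YZ$, then every production satisfies the AVL condition $\bigl|\,\mathrm{ht}(Y)-\mathrm{ht}(Z)\,\bigr|\le 1$. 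A Fibonacci-style argument then gives $|X|\ge F_{\mathrm{ht}(X)}$, hence $\mathrm{ht}(X)=O(\log|X|)$; in particular the start symbol of $\mathcal{G}_r$ has height $O(\log N)$, which already yields the claimed height bound.

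The two engines of the construction are a \textbf{concatenation} operation and a \textbf{substring-extraction} operation on balanced SLPs, mimicking the \emph{join} and \emph{split} of AVL trees. Given balanced SLPs for strings $u$ and $v$, one builds a balanced SLP for $uv$ by descending the right spine of the taller grammar until the subtree heights match, inserting one new binary production there, and then creating a short chain of new variables up that spine to absorb the AVL rotations needed to restore balance; this produces only $O(\log N)$ fresh variables, runs in $O(\log N)$ time, and reuses every other variable untouched. Symmetrically, given a balanced SLP for $w$ and an interval $[a:b]$, one obtains a balanced SLP for $w[a:b]$ by descending the derivation tree to isolate the $O(\log N)$ maximal subtrees that exactly tile $[a:b]$ and then folding them together with $O(\log N)$ invocations of the concatenation routine; again $O(\log N)$ new variables and $O(\log N)$ time.

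With these in hand the main loop is immediate. To pass from $\mathcal{G}_{i-1}$ to $\mathcal{G}_i$: if $f_i$ is a single character $c$, create the unit production $X\to c$ and concatenate it onto $\mathcal{G}_{i-1}$; otherwise, by the definition of the LZ77 factorization $f_i\in\Substr(f_1\cdots f_{i-1})$, say $f_i=(f_1\cdots f_{i-1})[a:b]$, so we extract a balanced SLP for that substring from $\mathcal{G}_{i-1}$ and concatenate it onto $\mathcal{G}_{i-1}$. Each step adds $O(\log N)$ variables and costs $O(\log N)$ time, so over all $r$ factors the final SLP has size $O(r\log N)$, is built in $O(r\log N)$ total time, and has height $O(\log N)$ throughout. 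Since $\mathcal{G}_r$ derives $f_1\cdots f_r=S$, the theorem follows.

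The part requiring the most care — and the heart of the argument — is verifying that \emph{join} and \emph{split} on balanced SLPs really do cost only $O(\log N)$ new nodes \emph{per operation} while preserving the AVL balance labels. The subtlety, absent in ordinary AVL trees, is that an SLP is a DAG whose variables may be shared by many parents, so we may never mutate an existing variable; all rebalancing must be realized by spawning a fresh chain of variables along the affected spine. One checks, exactly as in the AVL analysis, that a single rotation cascade along one root-to-leaf path restores the invariant and visits only $O(\mathrm{height})=O(\log N)$ nodes, and that an interval of a balanced binary tree is covered by $O(\mathrm{height})$ maximal subtrees; everything else is bookkeeping on the height labels.
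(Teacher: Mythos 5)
The paper offers no proof of this statement: it is imported verbatim from Rytter~\cite{Rytter03} and used as a black box, so there is no internal argument to compare yours against. Your reconstruction is, in outline, exactly Rytter's AVL-grammar construction (height-balanced productions, join and split realized by spawning fresh variables along a spine rather than mutating shared ones, one split-plus-join per LZ77 factor), so the approach is the right one, and it is consistent with the non-self-referential LZ77 definition used in this paper, where each $f_i$ lies entirely inside $f_1\cdots f_{i-1}$.

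There is, however, one quantitative step you assert rather than prove, and it is the step on which the $O(r\log N)$ bound actually hinges. Extracting $w[a:b]$ tiles the interval with $O(\log N)$ maximal subtrees and then performs $O(\log N)$ concatenations; if each concatenation may cost $O(\log N)$ fresh variables, which is the only bound your join paragraph states, the resulting total is $O(\log^2 N)$ new variables per factor, i.e.\ an SLP of size $O(r\log^2 N)$ built in $O(r\log^2 N)$ time --- strictly weaker than the theorem. To get $O(\log N)$ per factor you need the sharper join bound: concatenating two AVL grammars of heights $h_1$ and $h_2$ creates only $O(|h_1-h_2|+1)$ new variables. Combined with the observation that the heights of the tiling subtrees form two monotone runs (increasing toward the node that stabs $[a:b]$, then decreasing), joining the pieces in a suitable order makes the costs telescope to $O(\log N)$. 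Without that refinement the argument as written only establishes the weaker $O(r\log^2 N)$ bounds; with it, your proof is a faithful account of Rytter's.
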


The following theorem is immediate from Corollary~\ref{cor:slp2lz78} 
and Theorem~\ref{theo:LZ77_2_SLP}.

\begin{theorem}
Given the LZ77 factorization of size $r$ for a string $S$ of length $N$,
we can compute the LZ78 factorization for $S$ in 
$O(r L \log N  + m \log N)$ time and $O(r L \log N + m)$ space,
where $m$ is the size of the LZ78 factorization for $S$,
and $L$ is the length of the longest LZ78 factor.
\end{theorem}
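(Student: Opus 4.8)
The plan is to chain together the two results the theorem references, namely Corollary~\ref{cor:slp2lz78} and Theorem~\ref{theo:LZ77_2_SLP}, and simply track how the parameters transform along the way. First I would invoke Theorem~\ref{theo:LZ77_2_SLP}: from the given LZ77 factorization of size $r$ for $S$, we build in $O(r\log N)$ time an SLP of size $n = O(r\log N)$ representing $S$. The crucial point here is not just the size bound but that this is an honest SLP in Chomsky normal form, so it is a valid input to Corollary~\ref{cor:slp2lz78}; the height bound $O(\log N)$ is a bonus we do not actually need, since Corollary~\ref{cor:slp2lz78} already works for arbitrary SLP height via the random-access structure of Theorem~\ref{thm:slprandomaccess}.

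Next I would feed this SLP into Corollary~\ref{cor:slp2lz78}, which computes the LZ78 factorization in $O(nL + m\log N)$ time and $O(nL + m)$ space, where $L$ is the length of the longest LZ78 factor of $S$ and $m$ is the number of LZ78 factors. Substituting $n = O(r\log N)$ gives $O(rL\log N + m\log N)$ time and $O(rL\log N + m)$ space, which is exactly the claimed bound. The time for the preliminary SLP construction, $O(r\log N)$, is absorbed into the $rL\log N$ term (since $L\geq 1$), so it contributes nothing extra.

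The only subtlety worth a sentence of care is that the quantities $L$ and $m$ appearing in the final bound are intrinsic properties of $S$ and its LZ78 factorization, not of the intermediate SLP, so there is no circularity or hidden dependence: once we have \emph{any} SLP for $S$, the LZ78 factorization of $S$ is determined, and $L$ and $m$ are fixed. I would also remark that the doubling-search variant of Corollary~\ref{cor:slp2lz78} is what lets us write $L$ rather than $\sqrt N$ here, which matters because $r$ can already be as small as $\Theta(\log N)$ and we do not want an extra $\sqrt N$ factor multiplying it.

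I do not anticipate a genuine obstacle — this is a composition of two cited black boxes — so the "hard part" is merely bookkeeping: confirming that $O(r\log N)$ construction time is dominated, that space is additive in the expected way, and that substituting $n=O(r\log N)$ into both the $nL$ time term and the $nL$ space term of Corollary~\ref{cor:slp2lz78} yields precisely $rL\log N$ in each. Once those substitutions are checked, the statement follows immediately.
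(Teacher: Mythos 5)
Your proposal is correct and follows exactly the paper's route: the paper states the theorem is immediate from Corollary~\ref{cor:slp2lz78} and Theorem~\ref{theo:LZ77_2_SLP}, i.e., build the SLP of size $n = O(r\log N)$ from the LZ77 factorization and substitute into the $O(nL + m\log N)$ time and $O(nL+m)$ space bounds. Your additional bookkeeping (absorbing the $O(r\log N)$ construction time, noting that $L$ and $m$ are intrinsic to $S$) is accurate and consistent with the paper.
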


It is also possible to improve the complexities of the above theorem
using Theorem~\ref{theo:faster_algo},
so that the conversion from LZ77 to LZ78 can be conducted in
$O(N_\alpha + m\log N)$ time and $O(N_\alpha + m)$ space,
where $N_\alpha$ here is defined for the SLP generated from the input LZ77 factorization.
This is significant since the resulting algorithm is at least as efficient as 
a na\"ive approach which requires decompression of the input LZ77 factorization,
and can be faster when the string is compressible.

\section{Discussion}
We showed an efficient algorithm for calculating the 
LZ78 factorization of a string $S$, from an arbitrary SLP 
of size $n$ which represents $S$.
The algorithm is guaranteed to be
asymptotically at least as fast as a linear
time algorithm that runs on the uncompressed text, and can be much
faster when $n$ and $m$ are small, i.e., the text is compressible.

It is easy to construct an SLP of size $O(m)$ that represents
string $S$, given its LZ78 factorization whose size is $m$~\cite{KidaCollageTCS}.
Thus, although it was not our primary focus in this paper,
the algorithms we have developed can be regarded as a
{\em re-compression} by LZ78, of strings represented as SLPs.
The concept of re-compression was recently used to speed up
fully compressed pattern matching~\cite{jez12:_faster}.
We mention two other interesting potential applications
of re-compression, for which our algorithm provides solutions:

\subsection*{Maintaining Dynamic SLP Compressed Texts}
    Modification to the SLP corresponding to edit
    operations on the string that it represents, 
    e.g.: character substitutions, insertions, deletions
    can be conducted in $O(h)$ time,
    where $h$ is the height of the SLP.
    However, these modifications are
    {\em ad-hoc}, and there are no guarantees as to how
    compressed the resulting SLP is,
    and repeated edit operations will inevitably cause degradation on the
    compression ratio.
    By periodically re-compressing the SLP, we can maintain the compressed
    size (w.r.t. LZ78) of the representation, without having to explicitly
    decompress the entire string during the maintenance process.

\subsection*{Computing the NCD w.r.t. LZ78 without explicit decompression}
  The Normalized Compression Distance
  (NCD)~\cite{cilibrasi05:_clust_compr} 
  measures the distance between two data strings, based on a specific
  compression algorithm.
  It has been shown to be effective for various clustering and
  classification tasks, while not requiring in-depth prior knowledge of the data.
  NCD between two strings $S$ and $T$ w.r.t. compression algorithm $A$ is
  determined by the values $C_A(ST)$, $C_A(S)$, and $C_A(T)$,
  which respectively denote the sizes of the compressed representation of
  strings $ST$, $S$, and $T$ when compressed by algorithm $A$.
  
  When $S$ and $T$ are represented as SLPs, we can compute
  $C_{\mathrm{LZ78}}(S)$ and $C_{\mathrm{LZ78}}(T)$ without
  explicitly decompressing all of $S$ and $T$, using the algorithms in this paper.
  Furthermore, the SLP for the concatenation $ST$ can
  be obtained by simply considering a new single variable and production rule
  $X_{ST} \rightarrow X_{S}X_{T}$, where $X_S$ and $X_T$ are respectively
  the roots of the SLP which derive $S$ and $T$.
  Thus, by applying our algorithm on this SLP, we can compute
  $C_{\mathrm{LZ78}}(ST)$ without explicit decompression as well.
  Therefore it is possible to compute $NCD$ w.r.t. LZ78 between strings represented
  as SLPs, and therefore even cluster or classify them,
  without explicit decompression.

\section*{Acknowledgements}
We thank the anonymous reviewers for helpful comments to
improve the paper.

\bibliographystyle{splncs03}
\bibliography{ref}
\end{document}